\def\shadowbox{\hbox{\rule[-0.0ex]{0.1ex}{1.2ex}%
\hspace{-0.1ex}\rule[-0.0ex]{1.2ex}{0.1ex}%
\hspace{0.0ex}\rule[-0.0ex]{0.1ex}{1.2ex}\hspace{-1.3ex}%
\rule[1.15ex]{1.25ex}{0.1ex}\hspace{-0.0ex}\rule[-0.25ex]{0.3ex}{1.1ex}%
\hspace{-1.2ex}\rule[-0.25ex]{1.1ex}{0.25ex}}}
\def\qed{\ifmmode \hbox{\hfill\shadowbox}
     \else \hphantom{x}\hfill\shadowbox \fi}
\newtheorem{theorem}{Theorem}[section]
\newtheorem{lemma}[theorem]{Lemma}
\newtheorem{definition}[theorem]{Definition}
\newtheorem{proposition}[theorem]{Proposition}
\newtheorem{Signal Set}[theorem]{Signal Set}
\def\R {\mathbb{R}}
\def\Z {\mathbb{Z}}
\def\N {\mathbb{N}}
\def\C {\mathbb{C}}
\def\O {\mathcal{ O}}
\def\F {\mathcal{F}}
\def\s {\hat{\sigma}}
\def\t {\hat{\tau}}
\def\Ls {\mathbf{L}_{\sigma}}
\def\Lt {\mathbf{L}_{\tau}}
\def\p {\psi_{k,l}^{r}}
\def\pt {\psi_{k,l}^{t}}
\def\A {\mathbf{A}}
\def\sh {\hat{\sigma}}
\def\bigskip {\hspace{1in}}
\def\spot {\hspace{.3in}}
\def\psis {\psi_{s}}
\def\Rt {\mathbb{R}^{2}}
\def\r {\frac{1}{\rho}}
\def\rt {\frac{1}{\rho^{2}}}
\def\S  {\mathcal{S}}
\def\Sp  {\mathcal{S^+}}
\def\ab {\frac{\alpha}{\beta}}
\def\ba {\frac{\beta}{\alpha}}
\def\Wpp {W(\psi,\psi)}
\def\pswf{{\varphi}}
\def\projpswf{{\bf P}}
\def\range {\operatorname{range}}
\def\psis {\psi_{s}}
\def\rb {\frac{\rho}{b}}
\def\ra {\frac{\rho}{a}}
\def\ab {\frac{\alpha}{\beta}}
\def\ba {\frac{\beta}{\alpha}}
\def\psD {\frac{\pi s D}{4}}
\def\fpsD {\frac{\pi}{4s}D}
\def\I {\mathcal{I}}
\def\bound {e^{-\frac{\rho}{2}(\beta+\alpha)} }
\def\As {\mathcal{A}}
\def\Ws {\mathcal{W}}
\def\Sb {\mathbf{S}}
\def\Cb {\mathbf{C}}
\def\J {\mathcal{J}}
\def\pt {\psi^{t}}
\def\pr {\psi^{r}}
\def\CN {\mathbf{C}_{N}}
\def\LtR {L^2(\mathbb{R})}
\def\Proj {\textnormal{\bf P}}
\def\Pp {{\Proj_{\Phi}}}
\def\pf {\frac{\pi}{4}}
\def\bas {(\frac{\beta}{\alpha})^2}
\def\abs {(\frac{\alpha}{\beta})^2}
\def\HS {\textnormal{HS}}
\def\rank {\textnormal{rank}}
\def\epsilon{\varepsilon}
\begin{document}

\title{Eigenvalue Estimates and Mutual Information for the Linear Time-Varying Channel}

\author{Brendan~Farrell,~\IEEEmembership{Member,~IEEE,}
        and Thomas~Strohmer
\thanks{\noindent B.~Farrell was with the Department of Mathematics,
University of California, Davis when the majority of this work 
was completed. He is now with the Lehrstuhl f\"ur Theoretische Informationstechnik, 
Technische Universit\"at M\"unchen, Arcisstr. 21, 80333 M\"unchen, Germany.
T.~Strohmer is with 
the Department of Mathematics, University of California, Davis, CA, 95616, USA.
\newline
E-mail: farrell@tum.de,\;strohmer@math.ucdavis.edu
\newline
B.~Farrell was  supported by NSF VIGRE grant DMS-0135345. 
B.~Farrell and T.~Strohmer were supported by NSF grant DMS-0511461, and T.~Strohmer 
was supported by AFOSR grant no. 5-36230.5710 and NSF grant DMS-0811169.}}

\maketitle

\begin{abstract}
We consider linear time-varying channels with additive white Gaussian
noise. For a large class of such channels we derive rigorous estimates of the 
eigenvalues of the correlation matrix of the effective channel in terms of 
the sampled time-varying transfer function and, 
thus, provide a theoretical justification for a relationship that 
has been frequently observed in the literature. We then use this
eigenvalue estimate to derive an estimate of the mutual information of the 
channel. Our approach is constructive and is based on a careful balance of
the trade-off between approximate operator diagonalization, signal
dimension loss, and accuracy of eigenvalue estimates.
\end{abstract}

\begin{IEEEkeywords}
Approximate Diagonalization, Eigenvalue Estimates, Mutual Information, Time-Varying Channel, Weyl-Heisenberg System
\end{IEEEkeywords}

\section{Introduction}

\subsection{Motivation} 

The linear, time-invariant (LTI) channel with impulse response $h$
\begin{equation}
r(t)=\int h(t-\tau)s(\tau)d\tau\label{eqLTI}
\end{equation} 
and additive white Gaussian noise with variance $\sigma^2$ 
has normalized capacity 
\begin{equation}
\frac{1}{2W}\int_{-W}^W \log\left(1+\frac{|\hat{h}(\omega)|^2}{\sigma^2}\right) d\omega
\label{capLTI}
\end{equation}
for signals band-limited to $[-W,W]$. 
This classical result is, of course, due to Shannon~\cite{Sha49},
and is probably the most fundamental result in information theory. 
We refer to~\cite{Gal68} for the mathematical steps and the 
information-theoretic details for establishing~\eqref{capLTI}.

The linear, time-variant (LTV) channel is given by 
\begin{equation}
r(t)=\int h(t,t-\tau)s(\tau)d\tau \label{eqLTV}.
\end{equation}
Motivated by Shannon's groundbreaking result, it has been a longstanding 
desire of engineers and mathematicians to derive a characterization of 
the capacity of time-varying channels in terms of the associated 
time-varying transfer function, analogous to~\eqref{capLTI}. While such a 
characterization seems still quite out of reach for the general case, 
our aim in this paper is to get one step closer to this ambitious goal.
The mathematical foundation for Shannon's famous result is the fact that
in the time-invariant case the (generalized) eigenvalues of the channel
matrix are directly related to samples of the transfer function.
Thus it is natural to ask to what extent such a relationship carries
over to the time-varying case, which is what we plan to answer in this
paper. 

For information-theoretic studies of some special cases of time-varying 
channels we refer the reader to~\cite{BPS95} and its vast list of references.
In this paper we focus on the class of time-varying 
channels whose spreading function decays at an exponential rate both
in time and frequency. This channel class is motivated by physical
properties of channel propagation and includes for instance underspread
channels~\cite{Bel63,Jak74}.

\subsection{Contributions}

A precise formulation of the results of this paper requires several 
steps of preparation. Therefore we delay the rigorous presentation
of our results to later sections, and instead give an informal
description of our contributions.

The main result of our paper shows that the eigenvalues of the 
correlation matrix 
of the effective channel can be well approximated via sampling values of 
the autocorrelation of the time-varying transfer function. We derive
rigorous bounds for the accuracy of this approximation. 
Our approach is constructive and is based on a careful balance of
the trade-off between approximate matrix diagonalization, signal
dimension loss, and accuracy of eigenvalue estimates.
While the proof of the eigenvalue estimate is quite delicate, 
this will come as no surprise to the expert in pseudodifferential operator 
theory, since characterizing the spectrum of a pseudodifferential
operator (which is essentially an operator of the form~\eqref{eqLTV}) via 
its symbol has always been a difficult task. 

We then show how this eigenvalue estimate can be used to derive an
estimate of the mutual information of these channels.
Recall that for the time-invariant 
case the mutual information (and thus in turn the capacity) is {\em precisely} 
captured by the sampled Fourier transform of the autocorrelation of the
impulse response, as the time interval is extended to infinity. 
Building on our eigenvalue estimates, we rigorously relate the mutual 
information to samples of the Fourier transform of the 
``twisted auto-convolved'' spreading function. 

\subsection{Remarks on the proof strategy} 

A few comments on the proof strategy seem in order. Two different types
of signal sets will play an important role: Weyl-Heisenberg signals
and prolate spheroidal wave functions.
The reader may wonder why we do not stick with just one of these two
types. The reason is that each of the two has some major advantages, but
also some significant limitations. Thus, by introducing both types,
Weyl-Heisenberg signals and prolate spheroidal wave functions (PSWFs), 
we can fully utilize the positive properties of each set, while mitigating 
its negative properties with the other set.

For the eigenvalue estimate we rely on a set of well localized
Weyl-Heisenberg signals whose span is close to the span of the PSWFs
in a sense that will be formalized in the proof.
While the PSWFs are optimally localized in an $L^2$-sense, their lack
of sufficient temporal decay (except for the first few PSWFs)
prohibits us from linking the eigenvalues of $\A^{\ast} \A$,
the correlation matrix of the effective channel, to
the associated time-varying transfer function.
The off-diagonal entries of the resulting matrix would have at best linear
decay, which is simply insufficient for any reasonable estimate.
On the other hand, the excellent localization properties of the 
Weyl-Heisenberg set yield an approximate diagonalization of the channel,
so that the off-diagonal entries of $\A^{\ast} \A$ decay exponentially,
which allows us to obtain a rather tight eigenvalue estimate. 

The mutual information will depend on the type and number of transmission
signals. We use a signaling set consisting of about $2TW$ mutually 
orthonormal $W$-bandlimited signals which are ``essentially localized'' 
to a time interval of length $T$. The associated signal space, rigorously 
defined in Definition~\ref{localrect},
will be denoted by $L^2(T,W,\epsilon)$. It is not difficult to construct
a {\em linear independent, well-localized} set of Weyl-Heisenberg signals.
However due to the infamous Balian-Low theorem (see Subsection~\ref{ss:wh}) 
such a set will be necessarily incomplete in $L^2(\R)$, which in turn
implies that the number of Weyl-Heisenberg signals inside $L^2(T,W,\epsilon)$
is somewhat smaller than $2TW+1$, the approximate dimension of 
$L^2(T,W,\epsilon)$. This dimension loss makes a direct estimate of the 
mutual information somewhat cumbersome. And that is where PSWFs come into play. 
We (approximately) represent $L^2(T,W,\epsilon)$ via the PSWFs, and 
then quantify the (small) dimension loss between the Weyl-Heisenberg set
and the PSWFs. Combining this estimate with our eigenvalue estimate
enables us then to estimate the mutual information in terms of
the time-varying transfer function.
 
\subsection{Connections to prior work} 

Our work is related to previous research on two aspects of time-varying 
channels. Previous authors have discussed diagonalizing the channel and 
giving the capacity in terms of singular values \cite{MG95,Dig97,BS99}, and 
other authors have focused on determining transmission signals with 
various useful properties \cite{KM98,LKS04}. Our paper is probably closest
in spirit to~\cite{DBS07}, where the authors derive estimates for the
non-coherent capacity for certain time-varying channels by carefully 
combining signal design with approximate diagonalization.

Much of the mathematical approach to time-varying channels from a 
time-frequency analysis perspective originated with 
Kozek \cite{KM98,Koz97,KM97}. 
While he addresses issues such as the composition and estimation of time-varying channel 
operators and the time-frequency localization of transmission signals, 
his focus is a WSSUS model. 
Here we work with a deterministic channel.  

The remainder of the paper is organized as follows. At the end of this
section we introduce mathematical tools and notation used throughout the paper. 
Section~\ref{MaMT} describes our setup, the channel model and the signal model.
We derive the eigenvalue estimate in Section~\ref{s:eigen} and present
the estimate of the mutual information in Section~\ref{s:mutual}.

\subsection{Mathematical tools and notation} \label{ss:tools}

Let $f$ be a function in $\LtR$.
The \emph{modulation operator} $M_{\omega}$ is defined by 
\begin{equation}
M_{\omega}f(t)=e^{2\pi i \omega\cdot t}f(t)
\end{equation}
and the \emph{translation operator} $T_x$ is defined by 
\begin{equation}
T_x f(t)=f(t-x)
\end{equation}
for all $f\in \LtR$. 
The \emph{Fourier transform} of a function $f\in L^2(\R)$ is given by
\begin{equation}
(\F f)(\omega)=\int f(t)e^{-2\pi i \omega t}dt.
\label{FT}
\end{equation}
We also write $\hat{f}$ for $\F f$. The Fourier transform of a function
in two variables is defined by extending~\eqref{FT} in 
the usual way to two dimensions. Sometimes we need to take 
the Fourier transform of a function $f(t_1,t_2)$ with respect to the first 
or the second variable only. In this case we write $\F_1 f$ or $\F_2 f$,
respectively.
When no interval is given, integration is over all of $\R$. 
For a complex-valued function $f$, we denote its complex conjugate
by $\bar{f}$.
The eigenvalues of a matrix ${\bf A}$ are denoted by $\lambda_j({\bf A})$.

The \emph{Weyl pseudodifferential operator} $\Ls$ is defined as
\begin{equation}
\Ls f(t)=\int\int \s(\omega,x)e^{-\pi ix\omega}T_{-x}M_{\omega}f(t)d\omega dx.\label{weyl}
\end{equation}
Here $\sigma$ is called the \emph{symbol} and its Fourier transform, $\sh$, is called the \emph{spreading function}. 
We can express the composition of two pseudodifferential operators $\Ls$,
$\Lt$ in terms of their symbols. There holds
$\Ls \Lt=\mathbf{L}_{\sigma\sharp\tau}$, 
where $\sigma\sharp\tau =\mathcal{F}^{-1}(\s \natural \t)$ denotes the \emph{twisted product} of $\sigma$ and $\tau$, 
and 
\begin{small}\begin{eqnarray*}
\lefteqn{(\s\natural\t) (\omega,x)=}\\
&\iint\s(\omega',x')\t(\omega-\omega',x-x')e^{-\pi i(x\omega'-\omega x')}d\omega'dx'&
\end{eqnarray*} \end{small}
is called the \emph{twisted convolution} of
$\s$ and $\t$, see~\cite{Gro01}.
This can be seen as a generalization of the composition rule of two
time-invariant operators via ordinary convolution. 

We set $\S=\overline{\sigma}\sharp\sigma$, 
which is the Fourier transform of the ``twisted autocorrelation'' of $\s$. 
Since $\S$ takes values in $\R$, $\Sp(u,v)$ is defined by 
$\Sp(u,v)=\max(S(u,v),0)$. 

\section{Channel Model and Signal Model}\label{MaMT}

We first derive an equivalent representation of the channel
model~\eqref{eqLTV}. We set $\sigma(t,\omega)=\F_2 h(t,\cdot)$. 
Several manipulations and applications of the Fourier transform yield~\cite{Gro01}
\begin{small}\begin{equation}
\int h(t,t-\tau)s(\tau)d\tau=\int\int \sh(\omega,x)M_{\omega}T_{-x}s(t)d\omega dx.
\end{equation}\end{small}
This allows us to equivalently express the linear time-varying channel as a 
pseudodifferential operator
\begin{equation}
\Ls s(t)=\int\int \s(\omega,x)e^{-\pi ix\omega}T_{-x}M_{\omega}s(t)d\omega
dx.\label{weyl1}
\end{equation}
The integral in equation~(\ref{weyl1}) has the interpretation 
that the received signal is a weighted 
sum of shifted and modulated copies of the original signal. 
Using the Weyl form allows us to express the channel as an operator that has further 
useful relationships to other forms that will be helpful in our proof. 
See~\cite{Gro01} for further background on such operators.

Our model is now given by the following steps 
and is illustrated in equations~(\ref{step1}-\ref{step5}).
First the random variable $x\in \C^N, x \sim\mathcal{NC}(0,I_N)$ 
is mapped to a set of orthonormal transmission signals $\phi_i$ as coefficients~(\ref{step1}). 
The signal passes through the channel given by $\Ls$~(\ref{step2}) 
and is corrupted by AWGN~(\ref{step3}). 
The received signal is mapped to a sequence of random variables $y$ 
by taking the inner product with the detection signals $\pr_{k}$~(\ref{step4}).

\begin{eqnarray}
\mathcal{NC}(0,I_N)\sim x &\stackrel{\Phi}{\rightarrow}&\sum_{i=1}^{N}x_{i}\phi_{i}\label{step1}\\
&\stackrel{\Ls}{\rightarrow}&\Ls \sum_{i=1}^{N}x_{i}\phi_{i}\label{step2}\\
&\stackrel{\oplus\;\textnormal{noise}}{\rightarrow}&\Ls \sum_{i=1}^{N}x_{i}\phi_{i}+n\label{step3}\\
&\stackrel{\Cb}{\rightarrow}&\hspace{-.1cm}\{ \langle \Ls \sum_{i=1}^{N}x_{i}\phi_{i}+n,\pr_{k}\rangle \}_{k\in\Z}\label{step4}\\
&=&y\label{step5}.
\end{eqnarray}
The reader will have noticed that we use a different set of signals
at the transmitter and the receiver. The mutual information between $x$ and 
$y$, $\I(x,y)$ depends on the transmission signals $\{\phi_i\}_{i=1}^N$ and 
the number of transmission signals, 
but as long as $\{\pr_k\}_{k\in \Z}$ is an orthonormal basis for $L^2(\R)$ or a 
tight frame, then $\I(x,y)$ is independent of 
the receive signals. It is clear that the transmission signals
$\{\phi_i\}_{i=1}^N$ should form a linearly independent set. 
As already briefly indicated, later the Balian-Low theorem will force us 
to select the linearly independent set of transmission signals 
from a set of functions that is also {\em incomplete} in $L^2(\R)$.
Obviously, this implies a dimension loss of the signal space which 
manifests itself in an additional error term in our main estimate of the 
mutual information.  An {\em additional} dimension loss would occur if we 
also used an incomplete signaling set at the receiver. However, at the 
receiver we are not
restricted to linearly independent signaling sets (thus the Balian-Low theorem
is no longer an obstacle) and therefore we will use a different, and
in fact overcomplete, signaling set at the receiver.

Now we introduce and discuss our requirement on the transmission signals. 
We require that they are $L^2$-localized to a time-frequency rectangle, 
which we formalize with the following definition. 

\begin{definition}\label{localrect}
We define the space $L^2(T,W,\epsilon)$ by 
\begin{small}
\begin{eqnarray*}
L^2(T,W,\epsilon)=&&\\
\lefteqn{\hspace{-2cm}\Big\{ f\in L^2(\R):\; \int_0^T|f(t)|^2dt\geq (1-\epsilon^2)\|f\|_{L^2(\R)} }\\
&\textnormal{ and }& \int_{-W}^W|\hat{f}(\omega)|^2d\omega\geq
(1-\epsilon^2)\|f\|_{L^2(\R)}\Big\}.
\end{eqnarray*}\end{small}
\end{definition}

Given the intervals $[0,T], [-W,W]$ we denote by $\{\pswf_n\}_{n=0}^{\infty}$ 
the associated PSWFs similar
to~\cite{LP62,Sle83}\footnote{The minor and trivial difference
to~\cite{LP62,Sle83} is that we consider $[0,T]$ and not $[-T,T]$.}.
Let $\projpswf$ be the orthogonal projection onto the span of
$\pswf_0,\dots,\pswf_{2TW}$. By Theorem~12 in~\cite{LP62} for every 
$f \in L^2(T,W,\epsilon)$,
\begin{equation}
\label{pswfapprox}
\| f - \projpswf f \|_2 \le 7 \epsilon \|f\|_2.
\end{equation}
In other words, $L^2(T,W,\epsilon)$ is well approximated by the first
$2TW+1$ elements of the PSWFs and 
$L^2(T,W,\epsilon)$ is essentially $(2TW+1)$-dimensional.

There are several reasons for restricting our transmission signals
to this space. Firstly, any real-world communication signal has finite
duration and (essentially) finite bandwidth. The above model is a standard
way to describe this property in a mathematically meaningful way~\cite{LP62}.
Secondly, for time-varying channels it is more insightful to
have expressions for eigenvalue estimates or mutual information for finite
time intervals (and of course finite bandwidth) than for infinite time,
as is also reflected in the papers~\cite{Gal68,MG95,DBS07}. Thus it
is useful to require some form of time-frequency localization of 
the transmission signals. We note that we could have chosen
the signal space with somewhat different localization conditions, such as
for instance using exactly time-limited signals. However, our symmetric
localization condition in Definition~\ref{localrect} lends itself to a 
somewhat shorter proof (admittedly, in spite of the overall length of our 
proof, the reader might find that using the term ``shorter'' is not 
appropriate here).

\if 0
Recall that our goal is to characterize the eigenvalues and the mutual 
information of the time-varying channel
as modeled in the previous section in terms of the time-varying transfer 
function of the channel, i.e., in terms of the symbol $\sigma$ of the 
associated pseudodifferential operator.

Our first result concerns the estimate of the spectrum of the time-varying
channel in terms of samples of ${\cal S}$. 
In a nutshell the eigenvalue estimate theorem says the following: 
\par {\it Result A (see Theorem~\ref{mainthm1} for a precise formulation):}
Define the matrix $\A$ via
\begin{equation}
\label{Amatrix1}
\A_{klk'l'}=\langle \Ls \phi_{k'l'},\psi_{kl}\rangle,
\end{equation}
where the properly chosen finite orthonormal set 
$\Phi_{T,W}:=\{\phi_{k',l'}\}$ is
contained in $L^2(T,W,\epsilon)$ and $\{\psi_{k,l}\}$ is a properly chosen 
set of receive signals (we can think of $\A^\ast \A$ as the correlation
matrix of the effective channel). We show that the eigenvalues 
$\lambda_j$ of $\A^\ast \A$ satisfy
\begin{equation}
\label{ev_estimate}
\lambda_j(\A^\ast \A) \approx {\cal S}(k_j,l_j),
\end{equation}
with $(k_j,l_j) \in \Lambda$, where
$\Lambda$ is a rectangular lattice of sampling points in the time-frequency
domain, and we quantify rigorously in which sense this approximation is true.

Our second result concerns an estimate of the mutual information related to
$\Ls$. 
\par {\it Result B (see Theorem~\ref{mainthm2} for a precise formulation):}
Let $\I_{\Phi_{T,W}}(x,y)$  denote the mutual 
information associated with the system given in 
lines~(\ref{step1}-\ref{step5}). We show that
\begin{equation}
\label{mi_estimate}
\I_{\Phi_{T,W}}(x,y) \approx \sum_{k=0,l=-L}^{K,L}
\log\Big(1+\frac{\S^{+}(\rho\ba k,\rho \ab l)}{\eta^{2}}\Big),
\end{equation}
and again we quantify rigorously in which sense this approximation is true.

For the time-invariant case, the mutual information is precisely captured
by samples of the Fourier transform of the autocorrelation of the impulse 
response when one allows $T\rightarrow \infty$. In the time-varying case 
the estimate is given by samples of the Fourier transform of the 
``twisted auto-convolved'' spreading 
function. 
\fi

\section{Eigenvalue estimates for time-varying channels} \label{s:eigen}

\subsection{Weyl-Heisenberg systems, time-frequency localization and mutual
information} \label{ss:wh}

We assume that the reader is familiar with frame theory and refer
to~\cite{Gro01} for background.
\begin{definition}
For a given function $\phi \in \LtR$ {\em (the window function)} and 
given parameters $a,b>0$, we denote the associated \emph{Gabor system}
or \emph{Weyl-Heisenberg system} by 
$(\phi,a,b):=\{M_{b l}T_{a k}\phi\}_{k,l\in \Z}$, $a,b\in \R^{+}$.
The \emph{redundancy} of this system is $\frac{1}{ab}$. 
\emph{(}Note that $ab\leq 1$ is necessary for $(\phi,a,b)$ to be a frame
for $\LtR$ \emph{\cite{Gro01}}.\emph{)} 
\end{definition}

\begin{proposition} \label{existence}
Let $g_{s}(t)=(2s)^{-1/4}e^{-\frac{\pi}{s} t^{2}}$, and set $\psis=\Sb^{-1/2}g_{s}$, 
where $\Sb$ is the frame operator corresponding to 
$(g_{s},\frac{a}{\rho},\frac{b}{\rho})$. 
Then $(\psis,\rb,\ra)=(\psis,\rho a,\rho b)$ ($ab=1$ and $\rho>1$) 
is an orthonormal system and there exist constants $C>0$ and $0<D<1$ such that 
\[ |\psis(t)|\leq Ce^{-D\frac{\pi}{s}|t|}  \;\;\;\forall\; t\in\R\]
\[ |\widehat{\psis}(\omega)|\leq Ce^{-D\pi s|\omega|}\;\;\;\forall \;\omega\in\R.\]
\end{proposition}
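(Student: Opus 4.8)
The plan is to separate the statement into its two assertions: that $(\psis,\rho a,\rho b)$ is an orthonormal system, which will follow from abstract frame theory once the correct sampling regime is identified, and that $\psis$ and $\widehat{\psis}$ decay exponentially, which is the substantive part.

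\textbf{Orthonormality.} First I would compute the lattice density. Since $ab=1$, the Gabor system $\{M_{\rho b l}T_{\rho a k}g_{s}\}_{k,l\in\Z}$ lives on a lattice of density $\frac{1}{(\rho a)(\rho b)}=\rt<1$ because $\rho>1$; that is, the system is \emph{sub-critically} sampled. For a Gaussian window this is precisely the regime in which the Seip--Wallst\'en theorem guarantees that $\{M_{\rho b l}T_{\rho a k}g_{s}\}$ is a Riesz sequence, i.e.\ a Riesz basis for its closed linear span $V:=\overline{\mathrm{span}}\,\{M_{\rho b l}T_{\rho a k}g_{s}\}$. Consequently the frame operator $\Sb$ is bounded, positive and boundedly invertible \emph{on $V$}, so $\Sb^{-1/2}$ is well defined there and $\psis=\Sb^{-1/2}g_{s}$ makes sense. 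I would then invoke two standard facts: (i) if $\{f_{i}\}$ is a Riesz basis for $V$ with frame operator $\Sb$, then $\{\Sb^{-1/2}f_{i}\}$ is an \emph{orthonormal} basis for $V$, since it is simultaneously a Parseval frame and the image of a Riesz basis under a bounded invertible map \cite{Chr03}; and (ii) the Gabor frame operator commutes with the time-frequency shifts along its lattice \cite{Gro01}, so that $\Sb^{-1/2}M_{\rho b l}T_{\rho a k}g_{s}=M_{\rho b l}T_{\rho a k}\Sb^{-1/2}g_{s}=M_{\rho b l}T_{\rho a k}\psis$. Combining (i) and (ii) shows that $(\psis,\rho a,\rho b)$ is itself a Gabor system and is orthonormal, as claimed.

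\textbf{Decay.} The heart of the matter is to show that $\Sb^{-1/2}$ does not destroy the strong time-frequency concentration of $g_{s}$. The window $g_{s}$ is Gaussian, hence decays faster than any exponential in both time and frequency, so $g_{s}$ lies in every weighted modulation space $M^{1}_{v}(\R)$ with submultiplicative weight $v(x,\omega)=e^{c_{1}|x|+c_{2}|\omega|}$. I would fix $c_{1}=\frac{\pi}{s}$ and $c_{2}=\pi s$, matching the scales appearing in the statement. The key analytic input is an inverse-closedness (Wiener-type) lemma: the Gabor frame operator of a window in $M^{1}_{v}$ belongs to a Banach $*$-algebra of operators whose associated matrices have exponential off-diagonal decay governed by $v$, and this algebra is inverse-closed and stable under the holomorphic functional calculus. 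Since $\Sb$ is positive and invertible, $z\mapsto z^{-1/2}$ is holomorphic near its spectrum, so both $\Sb^{-1}$ and $\Sb^{-1/2}$ remain in the algebra and map $M^{1}_{v}$ into itself. Applying this to $g_{s}$ gives $\psis=\Sb^{-1/2}g_{s}\in M^{1}_{v}$. Finally, the embedding of $M^{1}_{v}$ into the pointwise weighted space $L^{\infty}_{v'}$ for any weight $v'$ strictly dominated by $v$ yields, for every $0<D<1$, the bounds $|\psis(t)|\leq Ce^{-D\frac{\pi}{s}|t|}$ and $|\widehat{\psis}(\omega)|\leq Ce^{-D\pi s|\omega|}$; the reduction from the quadratic Gaussian exponent to a linear one reflects the smearing of $g_{s}$ into a lattice sum, while the factor $D<1$ is the cost of passing from $M^{1}_{v}$-membership to a pointwise estimate.

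\textbf{Main obstacle.} I expect the decay step to be the crux. The orthonormality is essentially bookkeeping once the Riesz-sequence property is in hand, but transferring the decay of $g_{s}$ through the nonlinear object $\Sb^{-1/2}$ requires the Wiener-type lemma for Gabor frames with \emph{exponential} rather than merely polynomial weights, and care is needed to ensure that the scales $\frac{\pi}{s}$ and $\pi s$ are preserved rather than some unspecified exponential rate. A more self-contained alternative would be to expand $\Sb^{-1/2}$ in a Neumann-type series in $\Sb$ and estimate the resulting sums of time-frequency shifted Gaussians directly, tracking the exponential tails term by term; this avoids citing the abstract algebra but demands delicate control of the series constants in order to recover a fixed $D<1$ independent of the truncation.
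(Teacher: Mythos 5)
Your treatment of orthonormality is correct, but note that it is the dual of what the paper does, and it silently reinterprets $\Sb$. In the paper, $\Sb$ is the frame operator of the \emph{supercritical} system $(g_{s},\frac{a}{\rho},\frac{b}{\rho})$, which is a frame for all of $L^{2}(\R)$ by Lyubarskii--Seip--Wallst\'en; the paper forms the canonical tight frame $(\Sb^{-1/2}g_{s},\frac{a}{\rho},\frac{b}{\rho})$ and then obtains orthonormality of $(\psis,\rho a,\rho b)$ on the adjoint lattice from Wexler--Raz/Ron--Shen duality \cite{Jan95,DLL95,RS97}. You instead take $\Sb$ to be the frame operator of the subcritical system restricted to its closed span and orthonormalize directly. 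By the same duality circle the two constructions yield the same window, but the paper's reading is the one used downstream --- e.g.\ the proof of Lemma~\ref{two} needs $(\psis,\frac{a}{\rho},\frac{b}{\rho})$ to be a tight frame for $L^{2}(\R)$ --- so if you adopt your definition you must additionally prove or cite that equivalence rather than leave it implicit.

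The decay argument, which you correctly identify as the crux, has a genuine gap. The Wiener-type lemma you invoke --- that the frame operator lies in a Banach $*$-algebra of operators with off-diagonal decay governed by $v$, and that this algebra is inverse-closed and stable under holomorphic functional calculus --- is not available when $v$ is an exponential weight. Spectral invariance of such algebras (Gr\"ochenig--Leinert, Jaffard, Sj\"ostrand and relatives) requires the weight to satisfy the Gelfand--Raikov--Shilov condition $\lim_{n\rightarrow\infty}v(nz)^{1/n}=1$, i.e.\ subexponential growth; the weight $v(x,\omega)=e^{c_{1}|x|+c_{2}|\omega|}$ violates it, and inverse-closedness with the \emph{same} weight fails in this regime. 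What survives for exponential decay is a Jaffard-type statement: invertibility plus exponential off-diagonal decay at rate $\gamma$ yields exponential decay of the inverse (and, via a contour-integral functional calculus, of $\Sb^{-1/2}$) at some strictly smaller rate $\gamma'<\gamma$ depending on the frame bounds. This is exactly why the proposition asserts only the existence of \emph{some} $0<D<1$, and why the paper remarks that $D\rightarrow 0$ as $\rho\searrow 1$: the loss occurs in the inversion itself, not, as you write, in passing from $M^{1}_{v}$-membership to a pointwise bound. Your conclusion ``for every $0<D<1$'' is symptomatic of the flaw: since the Gaussian lies in $M^{1}_{v}$ for \emph{every} exponential weight, your mechanism would force $\psis$ and $\widehat{\psis}$ to decay faster than any exponential for every fixed $\rho>1$, which is false (tight and dual windows of Gaussian Gabor frames are known to have genuinely exponential, not super-exponential, tails). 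The paper sidesteps all of this by citing Theorem~5 of \cite{BJ00}, which proves preservation of exponential window decay under $\Sb^{-1/2}$ up to a factor $D<1$ by complex-analytic (Zak-transform) methods rather than a Wiener lemma, together with Theorem~IV.2 of \cite{SB03} to transfer the result across dilations so that the rates $\frac{\pi}{s}$ and $\pi s$ come out as stated. Your proposed fallback (Neumann-series bookkeeping) can plausibly recover \emph{some} $D$, but it faces the same inherent rate loss, so it cannot rescue the ``every $D$'' claim either.
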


\begin{proof}
A fundamental theorem due to Lyubarskii, Seip and Wallsten states that 
$(g_{s},\frac{b}{\rho},\frac{a}{\rho})$ is a frame for $L^{2}(\R)$ if and only if $\frac{ab}{\rho^{2}}<1$ \cite{Lyu92,Sei92,SW92}.
By Theorem~5.1.6 and Corollary~7.3.2 in \cite{Gro01}, 
$(\Sb^{-1/2}g_{s},\frac{b}{\rho},\frac{a}{\rho})=(\psi_{s},\frac{b}{\rho},\frac{a}{\rho})$ 
is a tight frame for $L^{2}(\R)$ with frame constant $\rho^{2}$.
Now we use the Weyl-Heisenberg biorthogonality
relations~\cite{WR90,Jan95,DLL95}, which state that if 
$\Sb_{g,\gamma}=\sum_{k,l\in\Z}\langle \;\cdot\;,M_{\beta l}T_{\alpha k}g\rangle M_{\beta l}T_{\alpha k} \gamma=I$ on $L^{2}(\R)$, 
then $\langle \gamma,M_{l/\alpha}T_{k/\beta} g\rangle =\alpha\beta \delta_{k,0}\delta_{l,0}$.
A ready consequence of this essential theorem is that 
$(\psi_{s},\frac{\rho}{a},\frac{\rho}{b})=(\psi_{s},\rho b,\rho a)$ $(ab=1)$ 
is an orthonormal set~\cite{Gro01}.
Note that $(\psi_{s},\rho b,\rho a)$ does not span $L^{2}(\R)$). 
By Theorem~5 in \cite{BJ00}, up to a factor $0<D<1$, the exponential decay of $g_{s}$ and $\hat{g}_{s}$ is 
preserved in $\psi_{s}$ and $\hat{\psi}_{s}$. 
Finally, Theorem~IV.2 in \cite{SB03} implies that if $\psi_{1}$ is the window function for the orthonormal set based 
on the initial window $g_{1}$, then  $\psi_{s}$ is the corresponding window function for $g_{s}$. 
\end{proof}

Let $\psi_s,a$ and $b$ be as in the previous proposition. 
We construct our signals by setting $a=\ba$, $b=\ab$ and $s=(\ab)^2$. 
The signals are then defined by:
\begin{itemize}
\item[D1)] $\psi^{t}=\psi_{(\ab)^2}$
\item[D2)] $\psi^{r}=\frac{1}{\rho}\psi_{(\ab)^2}$
\item[D3)] $\p =M_{\r bl}T_{\r \ba k} \psi^{r}$
\item[D4)] $\psi_{k,l}^{t}= M_{\rho \ab l}T_{\rho \ba k} \psi^{t}$
\end{itemize}
Here $t$ stands for ``transmit'' and $r$ stands for ``receive''.

\begin{definition}\label{exploc}
A function $f\in \LtR$ is \emph{exponentially localized} 
to the region $[0,T]\times[-W,W]$  
if there exist constants $c_1,C_1,c_2$ and $C_2$ such that 
\begin{equation}
|f(t)|\leq C_1 e^{-c_1|t|}\;\textnormal{ and }\; |\hat{f}(\omega)|\leq C_2e^{-c_2|\omega|}
\end{equation}
for all $t\notin [0,T]$ and all $\omega\notin [-W,W]$. 
\end{definition}

The Balian-Low theorem~\cite{Gro01} precludes the existence of an orthonormal 
Weyl-Heisenberg basis $(\phi,a,b)$ for $\LtR$ with well-localized window
function. In particular, $\phi$ and $\hat{\phi}$ could never have
exponential decay. On the other hand (as for instance
Proposition~\eqref{existence} shows) it is not difficult to construct an 
orthonormal system that is incomplete in $L^2(\R)$ or an overcomplete system $(\phi,a,b)$ with 
a $\phi$ that is exponentially well localized in time and frequency. Thus, 
the Balian-Low theorem is the reason why we use a 
signaling set at the transmitter drawn from an incomplete system for $L^2(\R)$ 
(implying $\rho>1$) and an overcomplete signaling set at the receiver.

While mutual information is not the main topic of this section, we take
the opportunity to address a non-trivial aspect associated with
mutual information that arises from using a tight frame instead of
an orthonormal basis as receive functions.
If we used an orthonormal basis at the receiver, then the noise covariance matrix, 
$\Cb_N\Cb_N^\ast$ in the proof below, 
would be a multiple of the identity, and this proposition would 
be simple and standard. 
Using a unit-norm tight frame rather than an orthonormal basis does not change 
the eigenvalues, 
but it does make the property addressed in the proposition below more delicate. 
The exponential localization at the receiver and the $L^2(T,W,\epsilon)$-property at the transmitter, 
however, deliver the necessary approximations 
for this proposition to hold.

\begin{proposition}\label{equiv}
Let $\{\phi_{kl}\}_{(k,l)\in\J}$, $|\J|<\infty$, be orthonormal transmission signals 
contained in $L^2(T,W,\epsilon)$, 
and let $\{ \pr_{kl}\}_{k,l\in\Z}$ be a tight frame of exponentially localized receiver signals 
(with frame bound $1$).
Let $x\sim\mathcal{NC}(0,I_{|\J|})$ and 
$$y_{kl}=\langle \Ls
\sum_{k'l'\in\J}x_{k'l'}\phi_{k'l'}+n,\pr_{kl}\rangle, 
\quad \text{for $k,l\in\Z$},$$ 
where $n(t)$ is AWGN of variance $\eta^{2}$. 
Denote
\begin{equation}
\A_{klk'l'}=\langle \Ls \phi_{k'l'},\pr_{kl}\rangle.
\label{Amatrix}
\end{equation}
Then 
\begin{equation}
\I(x;y)=\sum_{i=1}^{|\J|}\log\left(1+\frac{\lambda_{i}(\A\A^{\ast})}{\eta^{2}}\right).\label{eigsum}
\end{equation}
\end{proposition}

\begin{proof} 
Let $\Phi:L^2(\R)\rightarrow L^2(\R)$ be the orthogonal projection onto $\textnormal{span}\{\phi_{kl}\}_{(k,l)\in\J}$, 
and let  $\Cb:L^{2}(\R)\rightarrow l^{2}(\Z^{2})$ and  
$\CN:L^{2}(\R)\rightarrow \C^{(2N+1)^{2}}$ be the coefficient operators given by 
$\Cb f=\{\langle f, \pr_{kl}\rangle \}_{k,l\in \Z}$ and
$\CN f=\{\langle f, \pr_{kl}\rangle \}_{|k|,|l|\leq N}$ for $N\in\N$. 
The mutual information $\I(x;y)$ is 
\begin{eqnarray*}
\lefteqn{\I(x;y)}\\
&=&\lim_{N\rightarrow\infty}\{\log\det(\CN \Ls   \Phi\Ls^{\ast}\CN^{\ast}+
\eta^{2}\CN\CN^{\ast})\\
&&\hspace{2cm}-\log\det(\eta^{2}\CN\CN^{\ast})\}.
\end{eqnarray*}

Assume $ \Phi \Ls\Ls^{\ast} \Phi$ has rank $k$, 
and arrange all eigenvalues in non-increasing order.
We must show that 
\begin{eqnarray*}
\lim_{N\rightarrow\infty}\lambda_{i}(\CN \Ls
\Phi\Ls^{\ast}\CN^{\ast}+\CN\CN^{\ast})
\lambda_{i}(\Cb \Ls \Phi\Ls^{}\Cb^{\ast})+1
\end{eqnarray*}
for $i=1,...,k$.
Note that $\Cb \Ls   \Phi\Ls^{\ast}\Cb^{\ast}$ and $ \Phi\Ls\Ls^{\ast} \Phi$ 
have the same nonzero eigenvalues. 

Since $\s$ decays exponentially in both variables and each $\phi_{k'l'}\in L^2(T,W,\epsilon)$, 
using the Cauchy-Schwartz inequality shows that each $\Ls\phi_{k,l}$ 
is exponentially localized  a time-frequency rectangle slightly larger than $[0,T]\times [-W,W]$.  
Thus the range of $\Ls \Phi$ is exponentially localized in time and frequency, 
and so any eigenvectors of  $ \Ls  \Phi\Ls^{\ast}$ corresponding to nonzero 
eigenvalues, since they belong to the range of $\Ls \Phi$,
are similarly exponentially localized, 
which holds as well for $\CN \Ls  \Phi\Ls^{\ast}\CN^{\ast}$ for all $N$. 
In particular, for all $f$ in the range of $\Ls \Phi$, there exist positive
constants $c,C$ such that 
\[ \|\CN^{\ast}\CN f-f\|_{L^{2}(\R)}\leq Ce^{-cN}.\]
Let $u^{(N)}_{i}$ be an eigenvector of $\CN \Ls  \Phi\Ls^{\ast}\CN^{\ast}$ 
corresponding to the nonzero eigenvalue $\lambda_{i}$. 
Then $u^{(N)}_{i}=\CN f^{(N)}_{i}$ for some $f^{(N)}_{i}$ in 
the range of $\Ls \Phi$. 
Now,
\begin{eqnarray}
\lefteqn{\lim_{N\rightarrow\infty}(\CN f^{(N)}_{i})^{\ast}\CN\CN^{\ast}(\CN
f^{(N)}_{i})}\\
&=&\lim_{N\rightarrow\infty}\langle
f^{(N)}_{i},\CN\CN^{\ast}\CN\CN^{\ast} f^{(N)}_{i}\rangle\label{limitone}\\
&=&\lim_{N\rightarrow\infty}\langle f^{(N)}_{i},f^{(N)}_{i}\rangle\label{limittwo}\\
&=&1.\nonumber
\end{eqnarray}
The convergence in lines~(\ref{limitone}) and~(\ref{limittwo}) is exponential. 
While exponential convergence is not necessary, without sufficient localization of all the functions involved, 
convergence at all does not hold a priori for~(\ref{limitone}) and~(\ref{limittwo}). 
For $i=1,...,k$, 
\begin{eqnarray*}
\lefteqn{\lim_{N\rightarrow \infty} \lambda_{i}(\CN \Ls
\Phi\Ls^{\ast}\CN^{\ast}+\eta^{2}\CN\CN^{\ast})}\\
&=&\lim_{N\rightarrow\infty}\lambda_{i}(\CN \Ls  \Phi\Ls^{\ast}\CN^{\ast})+ \eta^{2}
\end{eqnarray*}
The remaining eigenvectors of $\CN\CN^{\ast}$ are in the kernel of 
$\CN \Ls  \Phi\Ls^{\ast}\CN^{\ast}$. 
Thus, 
\begin{eqnarray}
\lefteqn{\lim_{N\rightarrow\infty}\Big\{\sum_{i=1}^{(2N+1)^{2}}
\log(\lambda_{i}(\CN \Ls  \Phi\Ls^{\ast}\CN^{\ast}+ \eta^{2} \CN\CN^{\ast}))}\nonumber\\
&&\hspace{2cm}-\sum_{i=1}^{(2N+1)^{2}}\log(\lambda_{i}( \eta^{2}\CN\CN^{\ast}))\Big\}  \nonumber\\
&=&\lim_{N\rightarrow\infty}\sum_{i=1}^{k}\log(\lambda_{i}(\CN \Ls  \Phi\Ls^{\ast}\CN^{\ast})+\eta^{2})\nonumber\\
&&\hspace{2cm}-\lim_{N\rightarrow\infty} \sum_{i=1}^{k}\log(\lambda_{i}(\eta^{2}\CN\CN^{\ast}))\label{fewer}\\
&=&\lim_{N\rightarrow\infty}\sum_{i=1}^{k}\log(1+\frac{\lambda_{i}(\CN \Ls  \Phi\Ls^{\ast}\CN^{\ast})}{\eta^{2}})\label{allones}\\
&=&\sum_{i=1}^{k}\log(1+\frac{\lambda_{i}( \Phi^{\ast}\Ls^{\ast} \Ls  \Phi^{\ast})}{\eta^{2}})\nonumber\\
&=&\sum_{i=1}^{k}\log(1+\frac{\lambda_{i}(\A^{\ast}\A)}{\eta^{2}})\nonumber\\
&=&\sum_{i=1}^{|\J|}\log(1+\frac{\lambda_{i}(\A^{\ast}\A)}{\eta^{2}}),\nonumber
\end{eqnarray}
where lines~(\ref{fewer}) and~(\ref{allones}) 
are consequences of the first half of the proof.
\end{proof}

\begin{subsection}{Eigenvalue Estimates}\label{estimates}

We are ready to give a rigorous formulation of our main result, which
states that the eigenvalues of the correlation matrix
$\A^{\ast}\A$ can be well approximated by samples of ${\cal S}$, the
twisted autocorrelation of the time-varying transfer function.

\begin{theorem}[Eigenvalue estimate]\label{mainthm1}
Assume the same setup as in Proposition~\ref{equiv}.
Furthermore, suppose that
\begin{equation}
\label{spread_decay0}
 |\sh(\omega,x)|\leq Ce^{-\beta|\omega|-\alpha|x|}.
\end{equation}
Let $\S=\overline{\sigma}\sharp\sigma$.
Then for $j=1,...,|\J|$, there exists an index pair $(k,l)$ such that
\begin{small}\begin{eqnarray}
\left|\lambda_j(\A^\ast \A ) - \S(\rho\ba k,\rho\ab l) \right|
&\leq& \O\left(\bound +  \frac{1}{(\alpha\beta D)^{2}} \right).\nonumber\\
&& \hspace{1cm}\label{eigenvalueestimate}
\end{eqnarray}\end{small}
\end{theorem}

{\em Remark:} Our decay condition~\eqref{spread_decay0} on the 
spreading function comprises the standard conditions of
exponential decay of delay spread and compact support of the Doppler
spread~\cite{Jak74}. 
Moreover, we could have imposed an {\em underspread} condition on
the spreading function, see~\cite{MH98} for various notions of underspread 
channels. It is not hard to see that condition~\eqref{spread_decay} 
includes (or can be
easily adapted to) several forms of underspread channels.
This would result in somewhat different constants in the error 
estimate at the cost of a slightly
longer proof, but the essence of the theorem would remain the same.
Furthermore, one can replace the exponential decay condition
by some form of (practically less justified) polynomial decay 
and show that the error term in~\eqref{eigenvalueestimate} would then decrease
at a corresponding polynomial rate.

To prove Theorem~\ref{mainthm1} we cannot use PSWFs, but instead 
introduce exponentially localized signals. The reason is that the PSWFs decay 
linearly~\cite{Sle83} and, thus, do not permit the bounds obtained in the 
main two lemmas of this section. This is heuristically explained by the 
fact that the PSWFs are the approximate eigenfunctions of the operator that 
restricts in time and frequency, which is a much different operator than 
a time-varying channel, for which the exponentially localized signals are 
approximate eigenfunctions. This is seen formally in the off-diagonal decay 
in the matrix $\A$ in Proposition~\ref{diag} below. However, since both sets 
of signals are localized, the spaces that they span are close, which is 
a point that we formalize later in the proof of Theorem~\ref{mainthm2}. 
Thus, the general idea is the standard linear algebra approach of 
working with the same space, but switching to a basis that allows for 
approximate diagonalization. 

We first need an auxiliary result.
\begin{lemma}\label{WA} 
For $f,g\in L^2(\R)$, let $\mathcal{W}(f,g)$ and $\mathcal{A}(f,g)$ denote their cross-ambiguity and cross-Wigner distributions~\cite{Gro01}. 
If $|\psi(x)|\leq Ce^{-c_{1}|x|}$ and $|\hat{\psi}(\omega)|\leq Ce^{-c_{2}|\omega|}$ for $c_{1},c_{2}>0$, then 
\begin{equation*}
|\mathcal{W}(\psi,\psi)(x,\omega)|\leq C^{2}e^{-\frac{1}{4}(c_{1}|x|+c_{2}|\omega|)}
\end{equation*}
and
\begin{equation*}
|\mathcal{A}(\psi,\psi)(x,\omega)|\leq C^{2}e^{-\frac{1}{4}(c_{1}|x|+c_{2}|\omega|)}.
\end{equation*}
\end{lemma}
\begin{proof}
The proof is contained in the proof of Theorem~2.4 in \cite{Str01}, 
when one views both distributions as short-time Fourier transforms, 
as explained in $\cite{Gro01}$.
\end{proof}

A key ingredient in our proof of Theorem~\ref{mainthm1} is the following 
lemma, which shows that the 
entries of the matrix ${\bf A}$ defined in~\eqref{Amatrix} decay exponentially 
fast as we move away from the main diagonal. The approximate diagonalization 
of ${\bf A}$ via a properly designed Weyl-Heisenberg systems is well known in a 
qualitative sense~\cite{KM97,Str04,DBS07}. What is new in the following 
lemma is that we give a precise quantitative formulation of this statement. 
This quantitative version is important in the subsequent steps, where
it will give rise to explicit and rigorous bounds on the approximation 
of the eigenvalues of ${\bf A^{\ast}A}$ by samples of the twisted
autocorrelation ${\cal S}$ of the time-varying transfer function.

\begin{lemma}\label{diag}
Assume that  
\begin{equation}
 |\sh(\omega,x)|\leq Ce^{-\beta|\omega|-\alpha|x|},
\end{equation}
that the signals are given according to properties $D1-D4$ above and 
that
\begin{equation}
\A_{klk'l'}=\langle \Ls \pt_{k'l'},\pr_{kl}\rangle.
\label{Amatrix}
\end{equation}
Then
\begin{eqnarray*}
 |\A_{klk'l'}|&\leq&  C(e^{-\alpha \rho |\rt l-l'|}+e^{-\pf D \abs\rho|\rt l-l'|})\\
&& \times(e^{-\beta\rho |\rt k-k'|}+e^{-\pf D\bas \rho|\rt k-k'|}).
\end{eqnarray*}
\end{lemma}

\begin{proof}
The following two essential identities hold for pseudodifferential
operators, cf.~\cite{Gro01}:
\begin{equation}
 \langle \Ls f,g\rangle=\langle \sigma, \Ws(g,f)\rangle\label{weylwigner}
\end{equation}
\begin{equation}
 \left|\langle \Ls T_{u}M_{\eta}f,T_{v}M_{\gamma}g\rangle|=|(\sh *\As(f,g))(u-v,\eta-\gamma)\right|.\label{weylamb}
\end{equation}
The system is given by $\psi_{(\ab)^2}  =\Sb^{-1/2}g_{(\ab)^2}$, where $g_{(\ab)^2}(t)=(2(\ab)^2)^{-1/4}e^{-\pi(\ba)^2 t^{2}}$, and by Proposition (\ref{existence})
\begin{equation}
 |\psi_{(\ab)^2}(t)|\leq Ce^{-\pi(\ba)^2 D|t|}
\end{equation}
\begin{equation}
 |\hat{\psi}_{(\ab)^2}(\omega)|\leq Ce^{-\pi \abs D|\omega|}.
\end{equation}
Lemma~\ref{WA} implies 
\begin{equation}
|\As(\psi_{(\ab)^2},\psi_{(\ab)^2})(x,\omega)|\leq Ce^{-\pf D\abs|x|-\pf D\bas |\omega|}.
\end{equation}
\begin{eqnarray*}
\lefteqn{|\A_{k,l,k',l'}|}\\
&=&|\langle \Ls \pt_{k'l'},\pr_{kl} \rangle|\\
&=&|\langle \Ls M_{\rho \ab l'}T_{\rho \ba k'} \psi,M_{\r \ab l}T_{\r \ba k} \psi\rangle|\\
&=&|(\sh *\As(\psi,\psi)) (\ba (\r k-\rho k'),\ab (\r l-\rho l'))|\\
&=&| \iint \sh(\omega,x)\As(\psi,\psi)\\
&&\spot (\ba (\r k-\rho k')-x,\ab (\r l-\rho l')-\omega)dxd\omega | \\
&\leq & C\iint e^{-\beta|\omega|-\alpha|x|}\\
&&  e^{-\pf \bas D|\ba (\r k-\rho k')-x|-\pf \abs D|\ab (\r l-\rho l')-\omega|}d\omega dx\\
&=&C\int  e^{-\beta|\omega|-\pf \bas D|\ab (\r l-\rho l')-\omega|}d\omega\\
&&\times \int e^{-\alpha|x|-\pf \abs D|\ba (\r k-\rho k')-x|}dx\\
&\leq & C(e^{-\alpha  \rho |\rt l-l'|}+e^{-\pf \ab D \rho|\rt l-l'|})\\
&& \times(e^{-\beta \rho |\rt k-k'|}+e^{-\pf\ba D \rho|\rt k-k'|}),
\end{eqnarray*}
where we have used the bound:
\begin{eqnarray*}
\int e^{-c_{1}|y|}e^{-c_{2}|X-y|}dy&\leq & C(e^{-c_{1}|X|}+e^{-c_{2}|X|}).\label{littlebound}
\end{eqnarray*}
\end{proof}

The following lemma shows that the eigenvalues of ${\bf A^{\ast} A}$
are well approximated by its diagonal entries.

\begin{lemma}\label{one}
Assume again the hypotheses of Proposition~\ref{diag}. 
Then for $j=1,...,|\J|$, there exists an index pair $(k,l)$ such that 
\begin{equation}
\left|\lambda_j(\A^\ast \A )-(\A^{\ast}\A)_{klkl})\right|\leq \O(\bound). 
\end{equation}
\end{lemma}

\begin{proof}

\begin{eqnarray*}
(\A^{\ast}\A)_{klk'l'}
&=&\sum_{j,j'\in\Z}\overline{\A_{jj'kl}}\A_{jj'k'l'}\\
&=&\sum_{j,j'\in\Z}\overline{\langle\Ls \pt_{kl},\pr_{jj'}\rangle}\langle \Ls \pt_{k'l'},\pr_{jj'}\rangle\\
&=&\sum_{j,j'\in\Z}\overline{\langle\Ls \pt_{kl},\pr_{jj'}\rangle\langle \pr_{jj'},\Ls \pt_{k'l'}\rangle}\\
&=&\langle\Ls\pt_{k'l'},\Ls\pt_{kl}\rangle\\
&=&\langle \mathbf{L}_{\S}\pt_{k'l'},\pt_{kl}\rangle,
\end{eqnarray*}
where $\S=\overline{\sigma}\sharp\sigma$ was defined in Section~\ref{MaMT}. 
Using the estimate from the proof of Lemma~\ref{diag}, 
we have that $|\hat{\S}(\omega,x)|\leq \frac{C}{\alpha\beta}e^{-\beta|\omega|-\alpha|x|}$.
Using the identity in equation~(\ref{weylamb}),
\begin{eqnarray*}
\lefteqn{|\langle \mathbf{L}_{\S}\pt_{k'l'},\pt_{kl}\rangle|}\\
&=&|(\hat{\S}*\mathcal{A}(\pt,\pt))( \rho \ba(k'-k),\rho \ab (l'-l))|\\
&\leq & C(e^{-\alpha  \rho | l-l'|}+e^{-\pf \ab D \rho| l-l'|})\\
&& \times(e^{-\beta \rho | k-k'|}+e^{-\pf\ba D \rho| k-k'|}).
\end{eqnarray*}
Next
\begin{eqnarray}
\lefteqn{\sum_{\stackrel{k=-K,...,K,k\neq k'}{ l=-L,...,L,l\neq l'}}|(\A^{\ast}\A)_{klk'l'}|}\nonumber \\
&\leq& C \sum_{k=-K,..,K,k\neq k'}( e^{-\beta\rho  |k-k'|}+ e^{-\pf D \ba\rho  |k-k'|})  \nonumber  \\
&&\times \sum_{l=-L,..,L,l\neq l'}( e^{-\alpha\rho  |l-l'|}+ e^{-\pf D \ab\rho  |l-l'|})    \nonumber\\
&=&\O \left( \left(\frac{e^{-\beta\rho }}{1-e^{-\beta \rho }}+\frac{e^{-\pf D\ba\rho }}{1-e^{-\pf D \ba\rho }}\right)\right.\nonumber\\
&&\spot \times\left. \left(\frac{e^{-\alpha \rho }}{1-e^{-\alpha\rho }}+\frac{e^{-\pf D\ab\rho }}{1-e^{-\pf D \ab\rho }}\right)     \right).\label{est}\\
&=&\O\left( \frac{e^{-\rho (\beta+\alpha)}}{ (1-e^{-\pf D\ba\rho})( 1-e^{-\pf D\ab\rho })}\right)\nonumber\\
&=&\O\left( e^{-\frac{\rho}{2}(\beta+\alpha)}\right). \label{next}
\end{eqnarray}
We now have an estimate on the off-diagonal sums of the matrix 
$\A^{\ast}\A$ and may apply the Gershgorin disc theorem to obtain the claim.
\end{proof}

Having established that the spectrum of ${\bf A^{\ast} A}$ is very close
to its diagonal entries, we next show that in turn
the diagonal of ${\bf A^{\ast}A}$ is well approximated by
the samples of the associated twisted autocorrelation ${\cal S}$.

\begin{lemma}\label{two}
Assume again the hypotheses of Proposition~\ref{equiv} 
and that 
\[ |\sh(\omega,x)|\leq Ce^{-\beta|\omega|-\alpha|x|}.\]
Let $\S=\overline{\sigma}\sharp\sigma$. 
Then 
\begin{equation*}
| (\A^{\ast}\A)_{klkl}- \S(\rho\ba k,\rho\ab l)| = \O\left( \frac{1}{(\alpha\beta D)^{2}} \right).
\end{equation*}
\end{lemma}
\begin{proof}
We first look at $(\A^{\ast}\A)_{klkl}$. 
The diagonal entries of $\A^{\ast}\A$ are 
\begin{eqnarray}
(\A^{\ast}\A)_{klkl}&=&\sum_{k'l'\in\Z^{2}}|\langle\Ls\pt_{kl},\psi^{r}_{k'l'}\rangle|^{2}\label{diagnorm}\\
&=&  \|\Ls \pt_{kl}\|^{2}_{2},
\end{eqnarray}
since $(\psi,\r a,\r b)$ is a tight Weyl-Heisenberg frame (Proposition (\ref{existence})).
\begin{eqnarray}
\lefteqn{\|\Ls \pt\|^{2}_{2}}\\
&=&\langle \Ls \pt_{kl},\Ls \pt_{kl}\rangle\\
&=& \langle \overline{\sigma}\sharp\sigma,\Ws(\pt_{kl},\pt_{kl})\rangle\\
&=&\int_{\R^{2}}\S(x,\omega)\Ws(\pt,\pt)(x-\rho\ba k,\omega-\rho\ab l)d\omega dx\nonumber
\end{eqnarray}

Setting $\S'=\partial_x\partial_\omega \S$, by the Riemann-Lebesgue Lemma, 
\begin{eqnarray*}
\|\S'\|_{\infty}&\leq& \iint |\hat{S}(\omega,x)|d\omega dx\\
&\leq& C\iint \frac{1}{\alpha\beta}e^{-\frac{\beta}{2}|\omega|-\frac{\alpha}{2}|x|}d\omega dx\\
&=&\frac{C}{(\alpha\beta)^{2}}.
\end{eqnarray*}

We use Lemma~\ref{WA} and the fact that $\iint \Wpp(\omega,x)d\omega
dx=\|\psi\|_{2}^{2}=1$, cf.~\cite{Gro01}.
\begin{eqnarray*}
\lefteqn{|\|\Ls \pt\|^{2}_{2}-\S(\rho\ba k,\rho\ab l)|}\\
&=&|\int_{\R^{2}}\S(x,\omega)\Ws(\psi,\psi)(x-\rho\ba k,\omega-\rho\ab l)d\omega dx\\
&&\hspace{4cm} -\S(\rho\ab l,\rho\ba k)|\\
&=& |\int_{\R^{2}}\S(x+\rho\ba k,\omega+\rho\ab l)\Ws(\psi,\psi)(x,\omega)d\omega dx\\
&& \hspace{4cm} -\S(\rho\ab l,\rho\ba k)|\\
&=& |\int_{\R^{2}}[\S(x+\rho\ba k,\omega+\rho\ab l)-\S(\rho\ab l,\rho\ba k)]  \\
&& \hspace{3cm}  \Ws(\psi,\psi)(x,\omega)d\omega dx)|\\
&\leq &\|S'\|_{\infty}\int_{\Rt}(|x|+|\omega|)\Ws(\psi,\psi)(x,\omega)|d\omega dx\\
&\leq&C \frac{1}{(\alpha\beta)^{2}}\int_{\Rt}(|x|+|\omega|)e^{-\psD |x|-\fpsD|\omega|}d\omega dx\\
&=&C \frac{1}{(\alpha\beta D)^{2}}
\end{eqnarray*}
These two bounds prove the lemma.
\end{proof}

\begin{proof}[Proof of Theorem~\ref{mainthm1}]
The estimate~\eqref{eigenvalueestimate} follows now readily by applying 
the triangle inequality to the left-hand-side of~\eqref{eigenvalueestimate},
and then using Lemma~\ref{one} and Lemma~\ref{two}.
\end{proof}

{\em Remark:} In the proof of this theorem we rely on using
Weyl-Heisenberg systems.
Instead we could have resorted to orthonormal Wilson bases~\cite{Gro01},
which do not suffer from the Balian-Low Theorem.
However it would have resulted in a less elegant relationship 
between eigenvalues and samples of ${\cal S}$. 
In particular, equations~\eqref{weylwigner} and~\eqref{weylamb} would have to
be replaced by more complicated expressions.

\section{From Estimating Eigenvalues to Estimating Mutual Information}
\label{s:mutual}

For the time-invariant case, the mutual information is precisely captured
by samples of the Fourier transform of the autocorrelation of the 
impulse response when one allows $T\rightarrow \infty$. At the core of
this relationship is the fact that the (generalized) eigenvalues of
the channel are directly linked to samples of the transfer function.
It turns out that for our class of time-varying channels 
a similar connection is true in an approximate sense. Using the eigenvalue
estimate from the previous section we will show that one can obtain
an estimate of the mutual information via samples of the Fourier transform 
of the ``twisted auto-convolved'' spreading function. 
This is the contents of the following theorem.

\begin{theorem}[Mutual information estimate]\label{mainthm2}
Assume that the spreading function $\sh$ in the system model satisfies 
\begin{equation}
|\sh(\omega,x)|\leq Ce^{-\beta|\omega|-\alpha|x|},
\label{spread_decay}
\end{equation}
and the AWGN $n(t)$ has variance $\eta^2$. 
Let $\Phi_{T,W}=\{\phi_k\}_{k=1}^N$ be a set of 
orthonormal functions contained 
in $L^2(T,W,\epsilon)$, where $N=(1-\delta)(2TW+1)$ for some $0\leq \delta<1$. 
Let $\I_{\Phi_{T,W}}(x,y)$  denote the resulting mutual information of the system given in lines~(\ref{step1}-\ref{step5}).  
Then there exist constants $0<D$, $1<\rho$ and  
small constants $0\leq \delta_1,\delta_2$ such that 
\begin{small}\begin{eqnarray}
\lefteqn{\left|\I_{\Phi_{T,W}}(x,y)-\sum_{k=0,l=-L}^{K,L}\log\left(1+\frac{\S^{+}(\rho\ba k,\rho \ab l)}{\eta^{2}}\right)\right|}\\
&\leq&  (2TW+1)\Bigg(\log\Big(1+\O\Big( \bound+\frac{1}{(\alpha\beta D)^{2}}\Big)\Big)\label{firsterror}\\
&&\hspace{-.5cm}+\log\Big(1+ \Big(\frac{ 14\epsilon }{\eta^2}+\frac{ (14\epsilon+\delta) }{\eta^2}\Big)\|\S\|_{L^\infty(\R)}  \Big)\label{seconderror}\\
&&\hspace{-.5cm}+\log\Big(1+ \Big(\frac{ 14\epsilon}{\eta^2}+\frac{1-\frac{(1-49\epsilon^2)}{\rho^2}+\frac{\delta_1}{\rho}\ba +\frac{\delta_2}{\rho}\ab }{\eta^2}\Big) \|\S\|_{L^\infty(\R)} \Big)\Bigg)\nonumber\\
&&\label{seconderrorb}
\end{eqnarray}\end{small}
where $K=\frac{T}{\rho}{\ab}-\delta_1$ and $L= \frac{2W}{\rho}\ba-\delta_2$. 
The parameters $D$ and $\rho$ have the relationship 
that $D\rightarrow 0$ as $\rho\rightarrow 1$ and $\rho\rightarrow \infty$ as $D\rightarrow 1$. 
The numbers $\delta_1$ and $\delta_2$  depend on the parameters $\alpha,\beta$ and $\epsilon$, 
but remain small as $T$ and $W$ increase. 
\end{theorem}

Before we proceed to the proof of this theorem, it seems prudent to 
comment on the statement of this theorem and the various elements that
come into play here.

{\em Remark 1:} 
In a nutshell our theorem shows that
$$\I_{\Phi_{T,W}}(x,y) \approx \sum_{k=0,l=-L}^{K,L}
\log\Big(1+\frac{\S^{+}(\rho\ba k,\rho \ab l)}{\eta^{2}}\Big),$$
and quantifies rigorously in which sense this approximation is true.
The error due to estimating the mutual information from the 
samples is given in~\eqref{firsterror} and is the conceptually more important 
one for this paper. 
The error in~\eqref{seconderror} results from the transition from the system 
$\Phi_{T,W}$ in $L^2(T,W,\epsilon)$ to the PSWFs, and the 
error~\eqref{seconderrorb} is due to the fact that the number of the 
constructed Weyl-Heisenberg signals used is less than the number of 
PSWFs corresponding to the time-frequency region.

{\em Remark 2:} The factor $\rho$ is necessary for our construction 
and is greater than $1$, see Proposition~\ref{existence} and the subsequent
discussion.
While taking $\rho$ very close to $1$ would make the error in 
equation~(\ref{seconderrorb}) very small, it would increase the error in 
equation~(\ref{firsterror}). 
We can, however, take $\rho$ to be fairly close to $1$, 
such as $\rho=5/4$.
This issue of the trade-off between time-frequency localization and loss of 
dimensions in signal space has also been pointed out in~\cite{DBS07}.

We need the following lemma for the proof of Theorem~\ref{mainthm2}. 

\begin{lemma}\label{difference}
Let $\S=\overline{\sigma}\sharp\sigma$ and $S^{+}(x,\omega)=(S(x,\omega))^{+}$.
Then 
\begin{eqnarray*}
\lefteqn{\left|\log(1+\lambda_{k,l}(\A^{\ast}\A))-\log(1+\S^{+}(\rho\ba k,\rho\ab l))\right|}\\
& =& \log\left( 1+\O\left(\bound+\frac{1}{(\alpha\beta D)^{2}} \right) \right)
\end{eqnarray*}
\end{lemma}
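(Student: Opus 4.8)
The plan is to pass from the eigenvalue $\lambda_{k,l}(\A^{*}\A)$ to the sample $\S^{+}(\rho\ba k,\rho\ab l)$ in three stages: first replace the eigenvalue by the corresponding diagonal entry $(\A^{*}\A)_{klkl}$, then replace that diagonal entry by the symbol sample using Lemma~\ref{two}, and finally convert the resulting additive error into the logarithmic form demanded by the statement. Write $\A^{*}\A=\Delta+E$, where $\Delta$ is the diagonal matrix carrying the entries $(\A^{*}\A)_{klkl}$ and $E$ is the off-diagonal remainder.

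The key input is already available in the proof of Lemma~\ref{one}: the estimate culminating in equation~(\ref{next}) bounds the full off-diagonal row sum $\sum_{(k,l)\neq(k',l')}|(\A^{*}\A)_{klk'l'}|$ by $\O(\bound)$. Since $\A^{*}\A$ is Hermitian (indeed positive semidefinite) and $E$ is Hermitian, this row-sum bound controls the operator norm $\|E\|=\O(\bound)$, and Weyl's perturbation inequality then gives that the sorted eigenvalues of $\A^{*}\A$ lie within $\O(\bound)$ of the sorted diagonal entries of $\Delta$. Labelling the eigenvalues so that the $(k,l)$-th eigenvalue is matched to the $(k,l)$-th diagonal entry, I obtain $|\lambda_{k,l}(\A^{*}\A)-(\A^{*}\A)_{klkl}|=\O(\bound)$.

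Combining this with Lemma~\ref{two} through the triangle inequality yields $|\lambda_{k,l}(\A^{*}\A)-\S(\rho\ba k,\rho\ab l)|=\O(\bound+\frac{1}{(\alpha\beta D)^{2}})$. Passing to the positive part costs nothing: $\lambda_{k,l}(\A^{*}\A)\geq 0$ because $\A^{*}\A$ is positive semidefinite, so where $\S\geq 0$ the bound is unchanged, while where $\S<0$ one has $\S^{+}=0$ and $0\leq\lambda_{k,l}(\A^{*}\A)\leq\lambda_{k,l}(\A^{*}\A)-\S=\O(\bound+\frac{1}{(\alpha\beta D)^{2}})$, so the same bound survives. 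Hence $|\lambda_{k,l}(\A^{*}\A)-\S^{+}(\rho\ba k,\rho\ab l)|=\O(\bound+\frac{1}{(\alpha\beta D)^{2}})$.

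Finally I would transfer this additive estimate into the logarithm: for nonnegative $x,y$ with, say, $x\geq y$, one has $\frac{1+x}{1+y}\leq 1+(x-y)$, so $|\log(1+x)-\log(1+y)|\leq\log(1+|x-y|)$. Applying this with $x=\lambda_{k,l}(\A^{*}\A)$ and $y=\S^{+}(\rho\ba k,\rho\ab l)$ produces the asserted $\log(1+\O(\bound+\frac{1}{(\alpha\beta D)^{2}}))$. I expect the main obstacle to be the matching step: Gershgorin alone only localizes each eigenvalue near \emph{some} diagonal entry, so the termwise pairing of $\lambda_{k,l}$ with the sample at position $(k,l)$ rests on the Hermitian structure and Weyl's inequality (together with a consistent ordering of eigenvalues and diagonal entries), which is exactly what legitimizes the sample-by-sample comparison that the downstream sum in Theorem~\ref{infocap} requires.
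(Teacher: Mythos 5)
Your proof is correct, and its skeleton---eigenvalue to diagonal entry (via the off-diagonal row-sum estimate), diagonal entry to symbol sample (via Lemma~\ref{two}), then conversion of the additive error into the logarithmic form---is the same as the paper's. The difference is the tool used for the first step. The paper routes it through Lemma~\ref{one}, whose proof feeds the row-sum bound (\ref{next}) into Corollary~\ref{logbound}, a majorization argument (Theorem~\ref{maj}, Proposition~\ref{prod}) combined with a Gershgorin pairing of eigenvalues with diagonal entries; you instead bound the operator norm of the Hermitian off-diagonal part $E$ by its maximal row sum and apply Weyl's perturbation inequality to match sorted eigenvalues with sorted diagonal entries. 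Your route is in fact more careful precisely where the paper is loose: Lemma~\ref{one} is a statement about the \emph{summed} capacity rather than individual eigenvalues, and the claim inside Corollary~\ref{logbound} that Gershgorin supplies a pairing $A_{ii}=\lambda_{i}+\epsilon_{i}$ with $|\epsilon_{i}|\leq\epsilon$ is not literally what Gershgorin gives when the discs overlap---a one-to-one pairing with error $\|E\|$ is exactly what Weyl's inequality provides, as you note. Your explicit handling of the positive part, using $\lambda_{k,l}(\A^{*}\A)\geq 0$ so that replacing $\S$ by $\S^{+}$ can only shrink the discrepancy, also fills a step the paper leaves implicit, since Lemma~\ref{two} is stated for $\S$ rather than $\S^{+}$. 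What the paper's majorization route buys is the aggregate bound $2KL\log(1+\O(\bound))$ directly in the form consumed by Theorem~\ref{infocap}; what your route buys is a genuinely termwise estimate with a well-defined eigenvalue labelling, which is what the statement of Lemma~\ref{difference}---and its use for the water-filling comparison in Theorem~\ref{infocapw}---actually requires.
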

\begin{proof}
Using Lemmas~\ref{one} and~\ref{two},
\begin{eqnarray*}
\lefteqn{\left|\log(1+\lambda_{k,l}(\A^{\ast}\A))-\log(1+\S^{+}(\rho\ba k,\rho\ab l))\right|} \\
&\leq& \left|\log(1+\lambda_{k,l}(\A^{\ast}\A)) -\log(1+(\A^{\ast}\A)_{klkl})\right|\\
&&+\left| \log(1+(\A^{\ast}\A)_{klkl}) -\log(1+\S^{+}(\rho\ba k,\rho\ab l))\right|\\
&=& \log\left( 1+\O\left(\bound + \frac{1}{(\alpha\beta D)^{2}} \right) \right)
\end{eqnarray*}
\end{proof}
\end{subsection}

\begin{proof}[Proof of Theorem~\ref{mainthm2}]
Let $\Proj $ denote the projection of $L^2(\R)$ onto the span 
of the $2TW+1$ PSWFs 
corresponding to $[0,T]\times[-W,W]$. 
From~\eqref{pswfapprox} we obtain
\begin{equation}
\|\Proj f\|^2_{L^2(\R)}\geq 1-49\epsilon^2 \|f\|^2_{L^2(\R)}\label{boundeps}
\end{equation}
for all $f\in L^2(T,W,\epsilon)$. We write $\Proj_{\Phi}$ for the 
projection onto the set $\{\phi_1,\dots,\phi_N\}$ and $G$ for the
Gram matrix of $\{ \Proj \phi_1,...,\Proj \phi_N\}$, i.e. 
\begin{equation}
G_{i,j}=\langle \Proj \phi_j,\Proj \phi_i\rangle \;\;i,j=1,...,N.
\end{equation}
Then $\rank(\Proj \Proj_{\Phi})=\rank(G)$. 
Note that the diagonal entries of $G$ are positive and, since 
$\{\phi_1,...,\phi_N\}$ are orthonormal, that the eigenvalues of $G$ have absolute value at most $1$.   
By inequality (\ref{boundeps}) 
\begin{eqnarray*}
\sum_{j=1}^N G_{j,j}&=& \sum_{j=1}^N \|\Proj \phi_j\|^2\\
&\geq & N (1-49\epsilon^2),
\end{eqnarray*}
so that $\rank(G)\geq N(1-49\epsilon^2)$. 
Therefore, 
\begin{eqnarray*}
\rank(\Proj^{\perp}\Proj_{\Phi})&\leq & \rank(\Proj)-\rank (\Proj \Proj_\Phi)\\
&\leq & (2TW+1)-(1-49\epsilon^2)N, 
\end{eqnarray*}
and
\begin{eqnarray*}
\lefteqn{\| \Proj_\Phi \Proj\Ls\Ls^\ast \Proj\Proj_\Phi- \Proj\Ls\Ls^\ast \Proj\|_{\HS}}\\
&= & \| -\Proj_\Phi^{\perp} \Proj\Ls\Ls^\ast \Proj\Proj_\Phi+
\Proj\Ls\Ls^\ast\Proj\Proj_{\Phi}  -  \Proj\Ls\Ls^\ast\Proj\|_{\HS}\\
&=& \| -\Proj_\Phi^{\perp} \Proj\Ls\Ls^\ast\Proj\Proj_\Phi+\Proj\Ls\Ls^\ast\Proj\\
&&\hspace{3cm} -\Proj\Ls\Ls^\ast\Proj\Proj^{\perp}_{\Phi}  -
\Proj\Ls\Ls^\ast\Proj\|_{\HS}\\
&\leq& \| \Proj_\Phi^{\perp}
\Proj\Ls\Ls^\ast\Proj\Proj_\Phi\|_{\HS}+\|\Proj\Ls\Ls^\ast\Proj\Proj^{\perp}_{\Phi}  \|_{HS}\\
&\leq& 2\|\Proj_\Phi^{\perp} \Proj\|_{\HS}\|\Ls\|^2\\
&\leq & 2\;\rank(\Proj_\Phi^{\perp} \Proj)\|\Ls\|^2\\
&\leq & 2((2TW+1)-(1-49\epsilon^2)N)\|\Ls\|^2.
\end{eqnarray*}
If $2TW+1>N$, then set $\lambda_j(\Proj_\Phi \Ls\Ls^\ast \Proj_\Phi)=0$ for $N<j\leq 2TW+1$. 
Let $\pi$ be a permutation of the integers $1,...,2TW+1$. Then
\begin{small}
\begin{eqnarray*}
\lefteqn{\Big|\sum_{j=1}^{2TW+1} \log\big(1+\frac{\lambda_j( \Proj_\Phi \Ls\Ls^\ast\Proj_\Phi)}{\eta^2}\big)}\\
&&\hspace{2cm}-\sum_{j=1}^{2TW+1} \log\big(1+\frac{\lambda_{\pi(j)}( \Proj  \Ls\Ls^\ast \Proj)}{\eta^2}\big)\Big|\nonumber\\
&\leq& \sum_{j=1}^{2TW+1}\log\left(1+ \frac{| \lambda_j( \Proj_\Phi\Ls\Ls^\ast \Proj_\Phi)-\lambda_{\pi(j)}( \Proj  \Ls\Ls^\ast \Proj)|}{\eta^2}\right)\nonumber\\
&\leq& \sum_{j=1}^{2TW+1}\log\left(1+ \frac{| \lambda_j( \Proj_\Phi\Ls\Ls^\ast \Proj_\Phi)-\lambda_j( \Proj_\Phi \Proj \Ls\Ls^\ast \Proj
\Proj_\Phi)|}{\eta^2}\right.\\
&&\hspace{2cm}\left. +\frac{|\lambda_j( \Proj_\Phi \Proj\Ls\Ls^\ast
\Proj\Proj_\Phi)-\lambda_{\pi(j)}( \Proj  \Ls\Ls^\ast \Proj
)|}{\eta^2}\right).\notag 
 \end{eqnarray*}
\end{small}
We consider the first eigenvalue difference in the expression above.
Applying Theorem~A.46 in~\cite{BS10} we obtain
\begin{eqnarray}
\lefteqn{|\lambda_j( \Proj_\Phi \Proj\Ls\Ls^\ast
\Proj\Proj_\Phi)-\lambda_j( \Proj_\Phi \Proj\Ls\Ls^\ast\Proj) | }\nonumber\\
&&\hspace{1cm}\le  \|\Proj_\Phi \Ls \Ls^{\ast} \Proj_\Phi - \Proj_\Phi \Proj \Ls
\Ls^{\ast} \Proj \Proj_\Phi \|.\label{eigdiff1}
\end{eqnarray}
Let $\Pp f = u+v$ where $u \in \range \Proj$ and $v\in \range
\Proj^{\perp}$. Then
\begin{eqnarray}
\lefteqn{\|(\Pp \Ls \Ls^{\ast} \Pp - \Pp \Proj \Ls \Ls^{\ast} \Proj \Pp)f \|}\nonumber\\
& = & \|\Pp \Ls \Ls^{\ast}(u+v) - \Pp \Proj \Ls \Ls^{\ast} u \| \notag \\
&\le & \|(\Pp \Ls \Ls^{\ast} - \Pp \Proj \Ls \Ls^{\ast}) u \| +
 \|\Pp \Ls \Ls^{\ast} v \| \notag \\
& = & \|\Pp \Proj^\perp \Ls \Ls^{\ast} u \| + \|\Pp \Ls \Ls^{\ast} v \|
\notag \\
& \le & \|\Pp \Proj^\perp\| \| \Ls \|^2 \| u \| + \|\Pp \Ls \Ls^{\ast}
\Proj^\perp \Pp f \| \notag \\
& \le & 7 \epsilon \|\Ls\|^2 \|f\| + 7 \epsilon \|\Ls\|^2 \|f\| \notag \\
& \le & 14 \epsilon \|\Ls\|^2 \|f\|,\notag
\end{eqnarray}
where we have used~\eqref{pswfapprox} in the penultimate step. Hence
\begin{equation}
\label{eigdiff2}
|\lambda_j( \Proj_\Phi \Proj\Ls\Ls^\ast
\Proj\Proj_\Phi)-\lambda_j( \Proj_\Phi \Proj\Ls\Ls^\ast\Proj) | 
 \le  14 \epsilon \|\Ls\|^2.
\end{equation}

Concerning the second difference of eigenvalues
recall that according to Theorem~A.37 of~\cite{BS10} there exists 
a permutation $\pi$ such that 
\begin{eqnarray}
\lefteqn{\sum_{j=1}^{2TW+1} |\lambda_j( \Proj_\Phi \Proj\Ls\Ls^\ast
\Proj\Proj_\Phi)-\lambda_{\pi(j)}( \Proj  \Ls\Ls^\ast \Proj )|^2 } \hspace{1cm} \notag \\
&\le& \| \Proj_\Phi \Proj\Ls\Ls^\ast \Proj\Proj_\Phi- \Proj  \Ls\Ls^\ast\Proj \|^2_{\HS}
\notag \\
&\le&  4((2TW+1)-(1-49\epsilon^2)N)^2\|\Ls\|^4. \label{eigdiff3}
\end{eqnarray}
Using~\eqref{eigdiff2},~\eqref{eigdiff3} and the concavity of the $\log$ function we compute
{\small 
\begin{eqnarray*}
\lefteqn{\sum_{j=1}^{2TW+1}\log\Big( 1+ \frac{| \lambda_j( \Proj_\Phi\Ls\Ls^\ast \Proj_\Phi)-\lambda_j( \Proj_\Phi \Proj \Ls\Ls^\ast \Proj\Proj_\Phi)|}{\eta^2} }\\
&&\hspace{1cm} +\frac{|\lambda_j( \Proj_\Phi \Proj\Ls\Ls^\ast
\Proj\Proj_\Phi)-\lambda_{\pi(j)}( \Proj  \Ls\Ls^\ast \Proj
)|}{\eta^2}\Big) \notag \\
&\le&  \sum_{j=1}^{2TW+1}\log\Big( 1+ \frac{ 14\epsilon
\|\Ls\|^2}{\eta^2}\\
&&\hspace{1cm} +\frac{ |\lambda_j(\Proj_\Phi\Proj\Ls\Ls^\ast\Proj\Proj_\Phi)-\lambda_{\pi(j)}( \Proj_\Phi\Proj\Ls\Ls^\ast\Proj)|}{\eta^2} \Big) \notag \\
&\leq &  \sum_{j=1}^{2TW+1}\log\Big(1+ \frac{ 14\epsilon\|\Ls\|^2}{\eta^2}  \\
&&\hspace{2cm} +\frac{2 ((2TW+1)-(1-49\epsilon^2)N) \|\Ls\|^2}{\eta^2(2TW+1)}\Big)\label{estimate}
\end{eqnarray*}
}
We will return to \eqref{estimate} twice, taking $N$ to be the cardinality of $\Phi_{T,W}$ and of our constructed set. 


We look at the system $\{\psi^t_{k,l}\}$ from Section~\ref{ss:wh}. 
The signal $\psi^t_{k,l}$ is exponentially localized around the point 
$(\rho\ab l,\rho\ba k)$. 
We select those signals that are contained in $L^2(T,W,\epsilon)$. 
For some positive constants $\delta_1$ and $\delta_2$, these are those signals with indices 
$0\leq k\leq  \frac{T}{\rho}\ab-\delta_1$ 
and 
$0\leq |l|\leq \frac{W}{\rho}\ba-\delta_2$.
We set $K= \frac{T}{\rho}\ab-\delta_1$ 
and $L=\frac{W}{\rho}\ba-\delta_2$. 
We denote by $\Proj_{K,L}$ the projection operator from $L^2(R)$ onto the 
span of $\{\psi^t_{k,l}\}_{k=0,l=-L}^{K,L}$. 
Now we use~\eqref{estimate} twice: once with $N=(1-\delta)(2TW+1)$ for the cardinality of the set $\Phi_{T,W}$, 
as assumed in the statement of the theorem, 
and once for $\Psi_{K,L}$, where the cardinality satisfies 
\begin{eqnarray*}
K(2L+1)\geq  \frac{2TW+1}{\rho^2}-\delta_1\frac{2W}{\rho}\ba -\delta_2\frac{T}{\rho}\ab.
\end{eqnarray*} 
The arguments above then yield
\begin{small}\begin{eqnarray}
\lefteqn{\Big|\sum_{j=1}^{2TW+1} \log\Big(1+\frac{\lambda_j( \Proj_\Phi \Ls\Ls^\ast \Proj_\Phi)}{\eta^2}\Big)}\\
&&\hspace{1cm}-\sum_{j=1}^{2TW+1} \log\Big(1+\frac{\lambda_{\pi(j)}( \Proj_{K,L}  \Ls\Ls^\ast\Proj_{K,L} )}{\eta^2}\Big)\Big|\nonumber\\
&\leq& (2TW+1)\bigg(\log\Big(1+ \Big(\frac{ 14\epsilon }{\eta^2}+\frac{ 2(49\epsilon^2+\delta) }{\eta^2}\Big)\|\Ls\|^2\Big)\nonumber\\
&&\hspace{-.5cm}+\log\Big(1+ \Big(\frac{ 14\epsilon
}{\eta^2}+\frac{2(1-\frac{(1-49\epsilon^2)}{\rho^2}+\frac{\delta_1}{\rho}\ba +\frac{\delta_2}{\rho}\ab) }{\eta^2}\Big) \|\Ls\|^2  \Big)\bigg)\nonumber\\
\label{finalestimate}
\end{eqnarray}\end{small}
The estimation of the eigenvalues $\Proj_{K,L}\Ls\Ls^\ast \Proj_{K,L}$ 
is given by the Lemmas~\ref{one} and~\ref{difference}. Applying
these two lemmas together with inequality~(\ref{finalestimate}) 
complete the proof of the theorem. 
\end{proof}

\section*{Acknowledgement} \label{s:}

We would like to thank the anonymous referee as well as the editor, Helmut
B\"olcskei, for their very careful reading of the manuscript. Their
excellent
and constructive feedback significantly improved the contents and the
presentation of this paper.


\end{document}